\documentclass[conference,letterpaper]{IEEEtran}

\usepackage[utf8]{inputenc} 
\usepackage[T1]{fontenc}
\usepackage{url}
\usepackage{ifthen}
\usepackage{cite}
\usepackage[cmex10]{amsmath} 

\usepackage{graphicx} 
\usepackage{subcaption}
\usepackage{amsfonts}
\usepackage{amssymb}
\usepackage{amsthm}
\usepackage{natbib}
\usepackage{amsmath}
\usepackage{algorithm}
\usepackage{algorithmic}
\usepackage{xcolor}
\usepackage{url}
\usepackage{booktabs}
\usepackage{enumitem}
\usepackage{float}
\usepackage{hyperref}

\DeclareMathOperator{\tr}{tr}

\newcommand{\R}{\mathbb R}

\newcommand{\ip}[2]{\langle #1,#2\rangle}
\newcommand{\T}{\mathsf T}
\newcommand{\mmse}{\operatorname{mmse}}
\newcommand{\conv}{\operatorname{conv}}
\newcommand{\range}{\operatorname{range}}

\newcommand{\cI}{\mathcal I}
\newcommand{\cS}{\mathcal S}

\newcommand{\E}{\mathbb{E}}

\theoremstyle{plain}
\newtheorem{theorem}{Theorem}
\newtheorem{lemma}{Lemma}
\newtheorem{corollary}{Corollary}

\theoremstyle{definition}

\usepackage{todonotes}

\begin{document}


\title{Diffusion Sensitivity Geometry and Strong Duality for Optimal Privacy-Utility Tradeoffs} 

\author{%
\IEEEauthorblockN{Tue Do and Daniel Alabi}
\IEEEauthorblockA{Department of Electrical \& Computer Engineering\\
University of Illinois at Urbana-Champaign\\
Email: \{tuedo2,alabid\}@illinois.edu}%
}

\maketitle


\begin{abstract}
Existing approaches to differential privacy in diffusion models rely on loose sensitivity bounds and isotropic Gaussian noise, leading to unnecessary estimation error. We develop a framework for the optimal privacy-utility tradeoff of ideal diffusion denoisers, replacing conservative bounds with exact sensitivity characterizations and provably optimal noise calibration. First, we show that worst-case sensitivity over datasets of arbitrary size can be characterized exactly using only two records. For a data domain of size $m$, this reduces the analysis to at most $m(m-1)$ constraints, regardless of dataset size. Our characterization can be \textit{arbitrarily tighter than conventional diameter-based sensitivity bounds}, revealing potentially unbounded improvements in noise calibration. Second, we obtain exact privacy guarantees for R\'enyi, approximate, and Gaussian differential privacy, as well as mutual-information privacy for binary records, avoiding the looseness of previous generic privacy conversions. Finally, we show that the minimum achievable weighted mean-squared error can be computed through a semidefinite program and prove strong duality, providing certificates of global optimality.
\end{abstract}

\section{Introduction}

Diffusion models~\cite{ho2020denoising,song2020score}
now lead generative modeling of images and other high-dimensional data. Their training sets often contain sensitive content, revealed by the trained models; deep networks are shown to regenerate near-copies of training images~\cite{carlini2021extracting}, and an adversary can infer whether an individual record was used for training from the model's denoising behavior~\cite{rao2025score,do2026}. Differential privacy (DP)~\cite{DworkMNS06} is the standard remedy. Existing private diffusion models are trained with DP-SGD~\cite{abadi2016deep,dockhorn2023differentially,ghalebikesabi2023differentially}, which clips per-example gradients during training and adds isotropic Gaussian noise scaled to a fixed clipping norm. This calibration ignores the geometry of what is being protected and can add far more noise than the guarantee requires.

We ask what the optimal privacy-utility tradeoff is when privacy is enforced at the denoiser itself. We study explicitly the empirical risk minimizing denoiser (denoted as the \textit{ideal denoiser}), which materializes as a softmax-weighted average of the training records~\cite{do2026}. We release it through a Gaussian mechanism with a general covariance~$\Sigma$ and measure model utility by a weighted mean-squared error. We answer exactly how far one record can move the denoiser, how much privacy a given $\Sigma$ provides, and how to find the $\Sigma$ which provides the best privacy-utility tradeoff. Our contributions follow:

\begin{itemize}
  \item \emph{Sensitivity} (Section~\ref{sec:sensitivity}). The worst-case effect of one record, over datasets of every size, is generated by two-record datasets (Lemma~\ref{lem:convex-hull-sup}, Theorem~\ref{thm:twopoint}); for a fixed query and a domain of $m$ points, at most $m(m-1)$ constraints are needed. When one covariance must serve every query, the worst case is the largest Mahalanobis distance between two points of the domain (Corollary~\ref{cor:allqueries}), which can be far smaller than a naive diameter-based bound (Fig.~\ref{fig:simd10}).
  \item \emph{Privacy accounting} (Section~\ref{sec:renyi-dp}). We demonstrate that a single quantity $\rho_*(\Sigma)$ fixes the R\'enyi divergence between adjacent datasets at every order, the smallest zCDP and Gaussian DP parameters (Theorem~\ref{thm:rdp}), and the exact $(\varepsilon,\delta)$ curve for standard differential privacy, with no loss from generic conversions.
  \item \emph{Optimal noise} (Section~\ref{sec:primal-dual}). The least-error covariance under any of these budgets solves a semidefinite program whose conic dual has an exact gap formula; strong duality holds with both optima attained (Theorem~\ref{thm:dual}), so dual solutions certify global optimality.
  \item \emph{Mutual-information privacy} (Section~\ref{sec:mi-dp}). For binary records, mutual-information DP reduces to the capacity of a binary-input Gaussian channel whose capacity is strictly increasing in the signal-to-noise ratio (SNR) by the I-MMSE identity, giving another exact instance of the same program.
\end{itemize}

\begin{figure}
    \centering
    \includegraphics[width=1.0\linewidth]{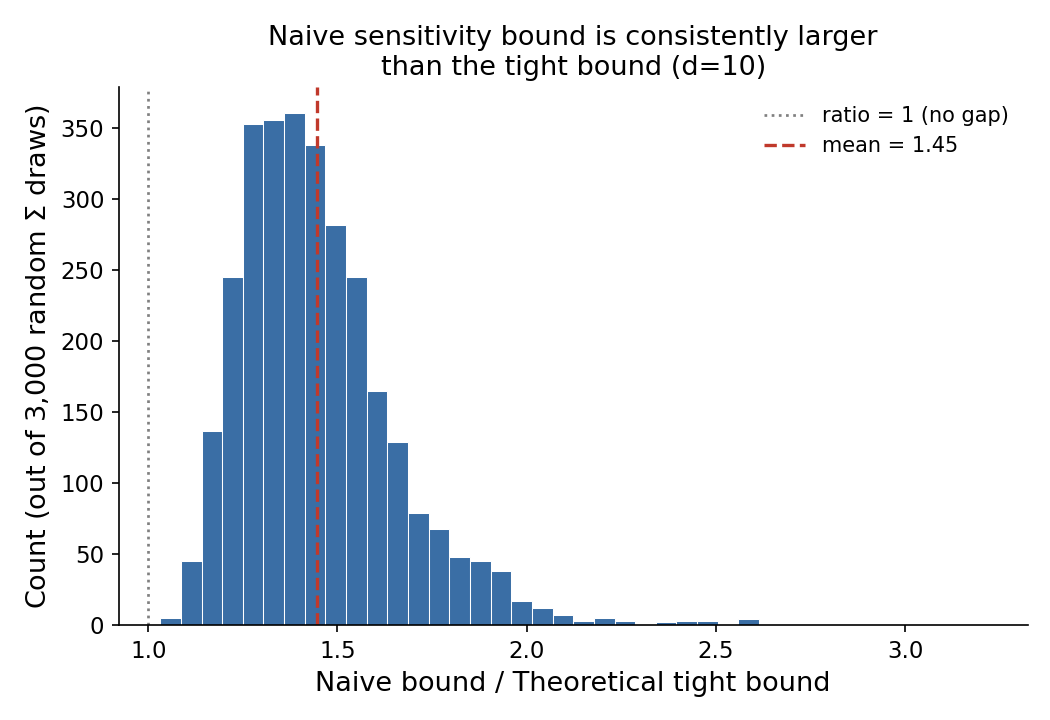}
    \caption{Histogram of the ratio of naive sensitivity bound $({\rm diam}(K)^2\cdot\lambda_{\max}(\Sigma^{-1}))$ to theoretical diffusion-specialized sensitivity bound over Gaussian mechanisms with random covariance matrices $\Sigma$ for a simulated convex hull $K=[0,1]^d$. Our theoretical bound achieves sensitivity analysis on average $1.45\times$ tighter with up to $2.5\times$ gain in tail cases.}
    \label{fig:simd10}
\end{figure}


\subsection{Related Work}

\emph{Privacy-utility tradeoffs.} Information theory has long studied the tension between disclosure and utility, as exemplified with Yamamoto's source coding problem with a secret component~\cite{yamamoto1983source}. It has since been studied through the privacy funnel~\cite{makhdoumi2014funnel}, MMSE estimation under privacy constraints~\cite{asoodeh2016mmse,asoodeh2019estimation}, and estimation guarantees under disclosure limits~\cite{wang2019privacy}. Under DP, Geng and Viswanath show that the optimal additive noise for a scalar query follows a staircase-shaped distribution rather than the Gaussian law~\cite{geng2016optimal}. However, none of these works directly apply to the diffusion model, which we show leads to interesting characterizations of privacy-utility tradeoffs.

\emph{Gaussian mechanisms and accounting.} We consider approximate DP~\cite{DworkMNS06}, R\'enyi DP~\cite{mironov2017renyi}, zCDP~\cite{bun2016concentrated}, Gaussian DP~\cite{dong2022}, and the mutual-information formulation of Cuff and Yu~\cite{CuffY16}; our exact $(\varepsilon,\delta)$ curve follows the analytic calibration of
Balle and Wang~\cite{balle2018gaussian}. Correlated Gaussian noise has been optimized for linear marginal queries through the matrix mechanism~\cite{xiao2023optimal} and for unbiased mean estimation~\cite{nikolov2024gaussian}. In those problems the sensitivity follows directly from the query; here the release is a specific nonlinear function of the data, and its sensitivity set must first be characterized. 

\emph{Information-estimation relations.} The I-MMSE
identity~\cite{guo2005mutual} has been extended to mismatched estimation~\cite{verdu2010mismatched}, arbitrary additive perturbations through score functions~\cite{guo2009relative}, Poisson and semimartingale channels~\cite{AW12,JVW18}, pointwise relations~\cite{VW12}, Gaussian networks and matrix-valued SNR~\cite{DBPS17,RPD18}, and channels with feedback and memory~\cite{HS14,HS16}. We use the identity to show that the binary-input channel induced by our mechanism has capacity strictly increasing in the SNR, which makes mutual-information privacy an exact sensitivity budget.

\section{The Ideal Denoiser}
\label{sec:ideal-denoiser}

\subsection{Normalized empirical posterior mean}

Let $K\subset\R^d$ be a public (nonempty) bounded domain. A dataset $D=(x_1,\ldots,x_n)$ is a (nonempty) finite multiset in $K$. Fix a public query $g\in\R^d$ and noise scale $\sigma>0$. Define
\begin{align}
 w_g(x)&=\exp\{-\|g-x\|^2/(2\sigma^2)\},\label{eq:weight}\\
 r_i(g;D)&=\frac{w_g(x_i)}{\sum_{j=1}^n w_g(x_j)},\qquad
 m_g(D)=\sum_{i=1}^n r_i(g;D)x_i.\label{eq:mean}
\end{align}
The denominator is a sum of the weights of $x_j$. If $X_0$ is drawn uniformly from $D$ and $G=X_0+\sigma Z$ with $Z\sim\mathcal N(0,I)$, Bayes' rule implies that
$m_g(D)=\E[X_0\mid G=g,D]$. In particular $m_g(D)\in\conv K$.

Write $G_t=\sqrt{a_t}X_0+\sqrt{1-a_t}Z$, where $0<a_t<1$ is the cumulative signal coefficient. 
For an ideal predictor $\epsilon^*$,\footnote{Replacing $\epsilon^*$ with a learned/nonideal denoiser results in a significantly different analysis.} we obtain
\begin{equation}\label{eq:diffusion}
 m_t(G_t;D)=\frac{G_t-\sqrt{1-a_t}\,\epsilon^*(G_t,t;D)}{\sqrt{a_t}}.
\end{equation}
The entire numerator is divided by $\sqrt{a_t}$. Equations~\eqref{eq:weight}--\eqref{eq:mean} use the rescaled query $g=G_t/\sqrt{a_t}$ and $\sigma^2=(1-a_t)/a_t$. The empirical posterior mean formulation is related to
the ideal diffusion noise predictor setting
studied in DIME \cite{do2026}.

\subsection{Mechanism and a linear objective}
Release only
\begin{equation}\label{eq:mechanism}
 M_{\Sigma,g}(D)=m_g(D)+Z_\Sigma,
 \qquad Z_\Sigma\sim\mathcal N(0,\Sigma).
\end{equation}
The covariance is the same for all private datasets, independent of the data and of the Gaussian draw. It may depend on public information and the fixed query. Let $L$ be a fixed linear map and $Q\succeq0$ its error weight. For $W=L^{\T}QL$, the expected additional quadratic loss is exactly
\begin{equation}\label{eq:utility}
 \E\|LZ_\Sigma\|_Q^2=\tr(W\Sigma).
\end{equation}
The cross term against any baseline reconstruction error has mean zero. Mean prediction MSE uses $W=I$. If noise is added to $\epsilon^*$ instead, the reconstruction cost is $(1-a_t)\tr(W\Sigma_\epsilon)/a_t$. Conversely, mean noise $Z_\Sigma$ is equivalent to noise $-\sqrt{a_t/(1-a_t)}Z_\Sigma$ in the ideal noise prediction.

We assume $W\succ0$ to obtain compactness and for ease of computing matrix square roots. A semidefinite task weight still defines a valid SDP, but may leave unpenalized directions with unbounded variance and an unattained infimum. Adding a specified small positive weight or separately bounding those directions changes the problem. Equation~\eqref{eq:utility} measures exact quadratic response or one-step reconstruction error. i.e., 
we do not use a nonlinear full sampler.

\section{From a Fixed Dataset to Global Sensitivity}
\label{sec:sensitivity}

\subsection{Exact deletion}

For $n\ge2$, let $D_{-k}$ delete record $k$. Define $s_k(g;D)=m_g(D)-m_g(D_{-k})$. Since all Gaussian weights are positive,
\begin{equation}\label{eq:delete}
 s_k=r_k(x_k-m_g(D_{-k}))
 =\frac{r_k}{1-r_k}(x_k-m_g(D)).
\end{equation}
Indeed, $m_g(D)=r_kx_k+(1-r_k)m_g(D_{-k})$. This shows that the relevant direction depends on the posterior responsibility and the location of the deleted record relative to the remaining mean. In particular,
\begin{equation}\label{eq:deletebound}
 \|s_k\|\le r_k\,\operatorname{diam}(K).
\end{equation}
The posterior spread also determines the query Jacobian:
\begin{equation}\label{eq:jacobian}
 \nabla_g m_g(D)=\sigma^{-2}\sum_i r_i(x_i-m_g(D))(x_i-m_g(D))^{\T},
\end{equation}
obtained by differentiating $r_i$. 

\subsection{Adjacency Notions}

\subsubsection{Add/Remove}

For global \emph{add/remove} privacy, our domain consists of all nonempty finite multisets in $K$. Define
\begin{equation}\label{eq:globalset}
 \cS_g^{\rm ar}=\{\pm s_k(g;D):D\subset K,\ |D|\ge2\}.
\end{equation}

\subsubsection{Replacement}

Define 
\begin{equation}\label{eq:globalsetrep}
 \cS_g^{\rm rep}=\{s_k(g;D\cup\{x_k\})-s_k(g;D\cup\{x_k'\}):D\subset K,|D|\geq 1\}
\end{equation}

For the replacement adjacency at fixed dataset size $N\ge1$, write neighboring datasets as $D=(u,a)$ and $D'=(u,b)$, and form the larger multiset $V=(u,a,b)$. Then
\begin{equation}
\begin{aligned}
 m_g(D)-m_g(D')={}&[m_g(D)-m_g(V)]\\
 &+[m_g(V)-m_g(D')].
\end{aligned}
\label{eq:triangle-mahalanobis}
\end{equation}
An add/remove Mahalanobis squared bound $\rho_{\rm ar}$ therefore implies a replacement bound $4\rho_{\rm ar}$ by the triangle inequality. 


\subsection{A finite exact reduction for one fixed query}

\begin{lemma}[\citep{rockafellar1997convex}]
    For any convex function $f$ and set $S$, it holds that
    $$\sup_S f=\sup_{\conv(S)}f.$$
    \label{lem:convex-hull-sup}
\end{lemma}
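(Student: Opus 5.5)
The plan is to prove the two inequalities $\sup_S f\le\sup_{\conv(S)}f$ and $\sup_{\conv(S)}f\le\sup_S f$ separately. Throughout, $f$ is a convex function on a convex domain containing $S$ (so $f$ is defined on $\conv(S)$), and suprema take values in $(-\infty,+\infty]$. The first inequality is immediate from $S\subseteq\conv(S)$, since a supremum over a larger set can only be larger.

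For the reverse inequality, take any $y\in\conv(S)$. By the definition of the convex hull as the set of all finite convex combinations, we can write $y=\sum_{i=1}^k\lambda_i x_i$ with $x_i\in S$, $\lambda_i\ge0$ and $\sum_i\lambda_i=1$. We then apply the finite (Jensen) form of convexity, which follows from the two-point definition by induction on $k$:
\begin{equation*}
f(y)\le\sum_{i=1}^k\lambda_i f(x_i)\le\Big(\sum_{i=1}^k\lambda_i\Big)\sup_S f=\sup_S f .
\end{equation*}
Since $y\in\conv(S)$ was arbitrary, taking the supremum over $y$ gives $\sup_{\conv(S)}f\le\sup_S f$.

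The only step needing care is the induction establishing Jensen's inequality. Assume it holds for $k-1$ points and suppose $\lambda_k<1$. Write
\begin{equation*}
y=(1-\lambda_k)\,y'+\lambda_k x_k,\qquad y'=\sum_{i<k}\frac{\lambda_i}{1-\lambda_k}\,x_i .
\end{equation*}
Then $y'\in\conv(S)$ lies in the domain of $f$. The two-point convexity inequality gives $f(y)\le(1-\lambda_k)f(y')+\lambda_k f(x_k)$, and the induction hypothesis applied to $y'$ closes the step. If $\lambda_k=1$, then $y=x_k$ and the claim is trivial.

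We would also note how the lemma is used in the paper. It is applied to convex functions of the form $f(s)=s^{\T}\Sigma^{-1}s$ (convex because $\Sigma^{-1}\succeq0$) or to support-type linear functionals. It is also applied with $S$ possibly infinite, such as the sensitivity sets $\cS_g^{\rm ar}$. Since the argument above never uses finiteness or closedness of $S$, it covers these cases. If $f$ is extended-valued, the same inequalities hold in $(-\infty,+\infty]$.
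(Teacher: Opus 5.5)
Your proposal is correct and follows the paper's proof essentially verbatim: one direction comes from the inclusion $S\subseteq\conv(S)$, and the other from applying finite Jensen to a convex-combination representation of each point of $\conv(S)$. Your induction establishing finite Jensen and your remarks on extended values and infinite $S$ only fill in details of that same argument.
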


\begin{proof}
    For a convex $f$ and any $x\in\conv(S)$, write $x=\sum_i\lambda_i s_i$ with $s_i\in S,\lambda_i\geq 0,\sum_i\lambda_i=1$. Convexity gives us
    $f(x)\leq\sum_i\lambda_if(s_i)\leq\sup_{s\in S}f(s).$
    Taking $\sup$ over $x\in\conv(S)$ on the left, we get $\sup_{\conv(S)}f\leq\sup_S f$. The reverse inequality is trivial since $S\subseteq\conv(S)$.
\end{proof}

\begin{theorem}[Two record reduction]\label{thm:twopoint}
For fixed $g,\sigma$ and the add/remove adjacency notion, define
\begin{equation}\label{eq:twopoint}
 t_g(x,y)=\frac{w_g(x)}{w_g(x)+w_g(y)}(x-y),\qquad x,y\in K.
\end{equation}
The closed symmetric convex hull of $\cS_g^{\rm ar}$ equals that generated by all $t_g(x,y)$. Consequently, for every $\Sigma\succ0$,
\begin{equation}\label{eq:twopointsup}
 \sup_{s\in\cS_g^{\rm ar}}s^{\T}\Sigma^{-1}s
 =\sup_{x,y\in K}t_g(x,y)^{\T}\Sigma^{-1}t_g(x,y).
\end{equation}
\end{theorem}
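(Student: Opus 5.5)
The plan is to sandwich $\cS_g^{\rm ar}$ between the two-record set
$T_g=\{\pm t_g(x,y):x,y\in K\}$ and its closed convex hull. Since $T_g$ is symmetric, that hull is the closed symmetric convex hull of the $t_g(x,y)$. The sandwich gives equality of the closed symmetric convex hulls, and then \eqref{eq:twopointsup} follows from Lemma~\ref{lem:convex-hull-sup}.

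\emph{Step 1: $T_g\subseteq\cS_g^{\rm ar}$.} Take $D=(x,y)$ and delete $y$. By \eqref{eq:delete}, $s_1=r_1(x-m_g(\{y\}))=\frac{w_g(x)}{w_g(x)+w_g(y)}(x-y)=t_g(x,y)$. The set $\cS_g^{\rm ar}$ contains $\pm s_k$ by definition, so it contains $\pm t_g(x,y)$.

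\emph{Step 2: $\cS_g^{\rm ar}\subseteq\conv(T_g)$.} This is the main computation. Fix $D$ with $|D|\ge2$ and an index $k$. Write $w_j=w_g(x_j)$ and $W'=\sum_{j\ne k}w_j$. From \eqref{eq:delete},
\[
 s_k=\frac{w_k}{w_k+W'}\sum_{j\ne k}\frac{w_j}{W'}(x_k-x_j)=\sum_{j\ne k}\lambda_j\,t_g(x_k,x_j),
\]
with
\[
 \lambda_j=\frac{w_j(w_k+w_j)}{W'(w_k+W')}\ge0 .
\]
The weights satisfy
\[
 \sum_{j\ne k}\lambda_j=\frac{w_kW'+\sum_{j\ne k}w_j^2}{W'(w_k+W')}\le\frac{w_kW'+W'^2}{W'(w_k+W')}=1,
\]
using $\sum_{j\ne k} w_j^2\le W'^2$ for positive weights. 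Put the leftover mass $1-\sum_j\lambda_j$ on the vector $0$. This is legitimate because $0=\tfrac12t_g(x_k,x_j)+\tfrac12(-t_g(x_k,x_j))\in\conv(T_g)$. Hence $s_k\in\conv(T_g)$. The same expression with every term negated shows $-s_k\in\conv(T_g)$. The step I expect to need care is exactly this one: choosing the weights $\lambda_j$ so that they are nonnegative with total mass at most one, and handling the deficit through $0$ in the symmetric hull.

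\emph{Step 3: conclusion.} Steps 1 and 2 give $T_g\subseteq\cS_g^{\rm ar}\subseteq\conv(T_g)$. Taking closed convex hulls, and noting that all three sets are symmetric, the closed symmetric convex hulls of $\cS_g^{\rm ar}$ and of the $t_g(x,y)$ coincide. For \eqref{eq:twopointsup}, let $f(s)=s^{\T}\Sigma^{-1}s$, which is convex and continuous since $\Sigma\succ0$. By Lemma~\ref{lem:convex-hull-sup} and continuity of $f$, $\sup_S f=\sup_{\overline{\conv}(S)}f$ for any set $S$. Applying this to both $\cS_g^{\rm ar}$ and $T_g$, whose closed convex hulls agree, gives equal suprema. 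Finally, $f$ is even, so $\sup_{T_g}f=\sup_{x,y\in K}f(t_g(x,y))$.
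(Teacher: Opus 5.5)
Your proposal is correct and follows the paper's proof essentially step for step. Both use the same two-record embedding, the same expansion of $s_k$ over $t_g(x_k,x_j)$ with identical coefficients, the same mass bound via $\sum_{j\ne k} w_j^2\le W'^2$ with the leftover mass placed on $0$, and then Lemma~\ref{lem:convex-hull-sup}. Your explicit inclusion of $-t_g$ in $T_g$ and your continuity remark about closures handle signs and closures slightly more carefully than the paper does. One harmless wording slip in Step 1: to get $s_1=t_g(x,y)$ you delete $x$ (record 1), not $y$.
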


\begin{proof}
    Denote $T=\{t_g(x,y):x,y\in K\}$. We first observe that $T\subseteq\cS_g^{\rm ar}$, so $T\subseteq\conv S$. Take $D=\{x,y\}$ valid 2-point dataset in $K$. Then
    $s_x(g;D)=m_g(\{x,y\})-m_g(\{y\})=t_g(x,y)$
    exactly, so every $t_g(x,y)$ itself is some $s_k(g;D)$. It remains to show that $S\subseteq\conv T$. Consider some arbitrary $s_k\in\cS^{\rm ar}_g$. We first show that $0\in\conv(T)$. Taking any $x,y\in K$, then clearly $0$ lies in the line between $t_g(x,y)=c_1(x-y)$ and $t_g(y,x)=c_2(y-x)$ for some $c_1,c_2>0$.
    We can expand
    \begin{align*}
        s_k&=\frac{r_k}{1-r_k}(x_k-m_g(D))=\frac{r_k}{1-r_k}\sum_{i\neq k}r_i(x_k-x_i) \\
        &=\sum_{i\neq k}\frac{r_ir_k}{1-r_k}(x_k-x_i)=\sum_{i\neq k}c_it_g(x_k,x_i)
    \end{align*}
    where 
    $$c_i:=\frac{r_ir_k(w_k+w_i)}{(1-r_k)w_k}=\frac{w_i(w_i+w_k)}{\bigl(\sum_i w_i\bigr)\bigl(\sum_{i\neq k}w_i\bigr)}\geq 0.$$
    Then, adding the constants up we have
    $$\sum_{i\neq k} c_i=\frac{\sum_{i\neq k} w_i^2+w_k\sum_{i\neq k}w_i}{\bigl(\sum_i w_i\bigr)\bigl(\sum_{i\neq k}w_i\bigr)}$$
    $$=\frac{\sum_{i\neq k}w_i^2}{\bigl(\sum_i w_i\bigr)\bigl(\sum_{i\neq k}w_i\bigr)}+\frac{w_k\sum_{i\neq k}w_i}{\bigl(\sum_i w_i\bigr)\bigl(\sum_{i\neq k}w_i\bigr)}$$
    $$\leq \frac{\bigl(\sum_{i\neq k}w_i\bigr)^2}{\bigl(\sum_i w_i\bigr)\bigl(\sum_{i\neq k}w_i\bigr)}+\frac{w_k}{\sum_i w_i}=1.$$
    where the last inequality comes from Jensen's inequality $(\sum_{i\neq k}w_i^2\leq[\sum_{i\neq k}w_i]^2)$.
    Thus, $s_k$ can be written as a convex combination of $t_g(x_k,x_i)$ for $i\neq k$ and $0$, which proves that $S\subseteq\conv T$. Eq.\ref{eq:twopointsup} follows directly from Lemma~\ref{lem:convex-hull-sup}.
\end{proof}

For a finite public domain of size $|K|=m$, at most $m(m-1)$ nonzero constraints suffice for an exact finite SDP, despite the unbounded dataset size. Explicitly, the constraint of dataset mean shifts can be represented with the $m(m-1)$ unique $t_g$ terms generated from finite domain $K$.

\begin{corollary}[Uniformly over all query inputs]\label{cor:allqueries}
If a single covariance must work for every $g\in\R^d$, then
\begin{equation}\label{eq:allg}
 \sup_{g,D,k}s_k(g;D)^{\T}\Sigma^{-1}s_k(g;D)
 =\sup_{x,y\in K}(x-y)^{\T}\Sigma^{-1}(x-y).
\end{equation}
\end{corollary}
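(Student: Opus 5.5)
Fix $\Sigma\succ0$ and write $\|v\|_{\Sigma^{-1}}^2=v^{\T}\Sigma^{-1}v$. The plan is to prove the two inequalities in \eqref{eq:allg} separately. For each fixed $g$ we use Theorem~\ref{thm:twopoint} to reduce to two-record vectors, and then take the supremum over $g$.

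\emph{Upper bound.} Fix $g$. By Theorem~\ref{thm:twopoint}, the supremum of $\|s\|_{\Sigma^{-1}}^2$ over $s\in\cS_g^{\rm ar}$ equals $\sup_{x,y\in K}\|t_g(x,y)\|_{\Sigma^{-1}}^2$. Now $t_g(x,y)=\lambda(x-y)$ with $\lambda=w_g(x)/(w_g(x)+w_g(y))\in(0,1)$, so
\[
\|t_g(x,y)\|_{\Sigma^{-1}}^2=\lambda^2\|x-y\|_{\Sigma^{-1}}^2\le\|x-y\|_{\Sigma^{-1}}^2 .
\]
Taking the supremum over $x,y$ and then over $g$ gives ``$\le$''. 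Since the quantity is symmetric under $s\mapsto -s$, the $\pm$ in $\cS_g^{\rm ar}$ is harmless.

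\emph{Lower bound.} Fix $x\neq y$ in $K$ (if $K$ is a single point, both sides are $0$). Take $D=\{x,y\}$ and $k$ the index of $x$. As in the proof of Theorem~\ref{thm:twopoint}, $s_k(g;D)=t_g(x,y)=\lambda_g(x-y)$, where
\[
\lambda_g=\frac{1}{1+w_g(y)/w_g(x)},\qquad \frac{w_g(y)}{w_g(x)}=\exp\!\Bigl\{\frac{\|g-x\|^2-\|g-y\|^2}{2\sigma^2}\Bigr\}.
\]
Choose queries $g_j=x+j(x-y)$ for $j\to\infty$. Then
\[
\|g_j-x\|^2-\|g_j-y\|^2=j^2\|x-y\|^2-(j+1)^2\|x-y\|^2=-(2j+1)\|x-y\|^2\to-\infty,
\]
so $w_g(y)/w_g(x)\to0$ and $\lambda_{g_j}\to1$. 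Hence $\|s_k(g_j;D)\|_{\Sigma^{-1}}^2\to\|x-y\|_{\Sigma^{-1}}^2$. The left side of \eqref{eq:allg} is therefore at least $\|x-y\|_{\Sigma^{-1}}^2$ for every pair, which gives ``$\ge$''.

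Both parts are routine. The only subtle point is that the right-hand supremum is generally not attained at any finite $g$, because $\lambda_g<1$ always. The equality therefore holds only as a supremum, approached as queries move far out along the $x-y$ direction. This relies on $g$ ranging over all of $\R^d$ with $\sigma$ fixed; otherwise one would instead send $\sigma\to0$ with $g$ nearer $x$.
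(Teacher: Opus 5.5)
Your proof is correct and takes essentially the same route as the paper. For the upper bound you combine Theorem~\ref{thm:twopoint} with the fact that $t_g(x,y)=\lambda(x-y)$ with $\lambda\in(0,1)$; the paper phrases this as $T\subseteq\conv\{x-y:x,y\in K\}$ and invokes Lemma~\ref{lem:convex-hull-sup}, whereas you use $\lambda^2\le1$ directly, and for the lower bound you use the same query sequence $g=x+\gamma(x-y)$, $\gamma\to\infty$, on the two-record dataset $\{x,y\}$.
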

\begin{proof}
    Clearly, $0$ is in both the convex hull the set $R=\{(x-y):x,y\in K\}$, and any element $t(x,y)\in T=\{t_g(x,y): x,y\in K\}$ is a convex combination of $0$ and $(x-y)$ in $R$, so $T\subseteq\conv R$, so from Theorem~\ref{thm:twopoint} we get that  $\sup_{g,D,k}s_k(g;D)^\top\Sigma^{-1}s_k(g;D)\leq\sup_{x,y\in K}(x-y)^\top\Sigma^{-1}(x-y).$
    Fix $x\neq y\in K$. Denote $g_\gamma=x+\gamma(x-y)$ for $\gamma>0$. Then it holds that
    $$t_{g_\gamma}(x,y)=\frac{1}{1+\exp\left(-\frac{(2\gamma+1)\Vert x-y\Vert^2}{2\sigma^2}\right)}(x-y).$$
    As $\gamma\rightarrow\infty$, $t_{g_\gamma}(x,y)\rightarrow(x-y)$. Since the quadratic product $v^T\Sigma^{-1}v$ is continuous, it holds that any product $(x-y)^\T\Sigma^{-1}(x-y)$ can be written as a limit of terms $t_g(x,y)^\T\Sigma^{-1}t_g(x,y)$, which by Theorem~\ref{thm:twopoint} is bounded above by $\sup_{g,D,K}s_k^\T\Sigma^{-1}s_k$. Since limits preserve inequality, it holds that
$\sup_{g,D,k}s_k(g;D)^\top\Sigma^{-1}s_k(g;D)\geq\sup_{x,y\in K}(x-y)^\top\Sigma^{-1}(x-y).$
\end{proof}

\section{R\'enyi DP and Related Gaussian Privacy Definitions}
\label{sec:renyi-dp}

For a chosen adjacency relation, write $P_D$ for the output distribution and let $\cS$ be the full neighboring mean difference set (e.g., Equation~\eqref{eq:globalset}), or a set with the same closed symmetric convex hull. For $\alpha>1$,
\begin{equation}\label{eq:rdpdef}
 D_\alpha(P\|Q)=\frac{1}{\alpha-1}
 \log\int\left(\frac{dP}{dQ}\right)^\alpha dQ.
\end{equation}
A mechanism is $(\alpha,\varepsilon_\alpha)$-RDP if $D_\alpha(P_D\|P_{D'})\le\varepsilon_\alpha$ for every ordered neighboring pair \cite{mironov2017renyi}. It is $(\varepsilon,\delta)$-DP if $P_D(A)\le e^\varepsilon P_{D'}(A)+\delta$ for every such pair and measurable $A$ \cite{DworkMNS06}.

\begin{theorem}[Exact Gaussian privacy curve]\label{thm:rdp}
For a common positive definite covariance, set
$\rho_*(\Sigma)=\sup_{s\in\cS}s^{\T}\Sigma^{-1}s.$
The smallest valid RDP parameter at every order is
\begin{equation}\label{eq:rdpcurve}
 \varepsilon_\alpha^*(\Sigma)=\frac\alpha2\rho_*(\Sigma).
\end{equation}
Thus, one order-$\alpha$ budget is equivalent to $\rho_*\le2\varepsilon_\alpha/\alpha$, and several budgets are equivalent to $\rho_*\le\inf_{\alpha\in\mathcal A}2\varepsilon_\alpha/\alpha$, where $\mathcal A$ is the collection of R\'enyi orders being considered. The same mechanism is exactly $(\rho_*/2)$-zCDP and $\sqrt{\rho_*}$-GDP in the sense of their smallest parameters \cite{bun2016concentrated,dong2022}.
\end{theorem}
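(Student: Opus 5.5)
The plan is to reduce everything to the closed-form Rényi divergence between two Gaussians with a common covariance, and then take a supremum over neighboring pairs. For $P=\mathcal N(\mu_1,\Sigma)$ and $Q=\mathcal N(\mu_2,\Sigma)$ with $\Sigma\succ0$, I will compute
\[
D_\alpha(P\|Q)=\frac{\alpha}{2}(\mu_1-\mu_2)^{\T}\Sigma^{-1}(\mu_1-\mu_2)
\]
by completing the square in the exponent of $\int p^\alpha q^{1-\alpha}$. Equivalently, whitening by $\Sigma^{-1/2}$ reduces this to the isotropic case, where the log-likelihood ratio is Gaussian with mean $\tfrac12\|\Delta\|^2$ and variance $\|\Delta\|^2$, with $\Delta=\Sigma^{-1/2}(\mu_1-\mu_2)$.

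For an ordered neighboring pair $(D,D')$, the mean difference $m_g(D)-m_g(D')$ lies in $\cS$, or in a set with the same closed symmetric convex hull. Hence $D_\alpha(P_D\|P_{D'})=\tfrac{\alpha}{2}\,s^{\T}\Sigma^{-1}s$ for that $s$. Taking the supremum over neighbors gives exactly $\tfrac\alpha2\sup_{s\in\cS}s^{\T}\Sigma^{-1}s$. When $\cS$ is only a set sharing the closed symmetric convex hull, the supremum is unchanged: Lemma~\ref{lem:convex-hull-sup} applies because $s\mapsto s^{\T}\Sigma^{-1}s$ is convex and continuous (so closure does not matter) and even (so symmetrization does not matter). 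This is also how Theorem~\ref{thm:twopoint} feeds in.

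Minimality comes for free, because the divergence is an exact equality rather than a bound. Any $\varepsilon_\alpha<\tfrac\alpha2\rho_*$ is violated by some neighboring pair whose $s^{\T}\Sigma^{-1}s$ is close enough to the supremum. The equivalence of the budget constraints is then algebra: $\tfrac\alpha2\rho_*\le\varepsilon_\alpha$ for all $\alpha\in\mathcal A$ holds iff $\rho_*\le\inf_{\alpha}2\varepsilon_\alpha/\alpha$.

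For zCDP, the definition requires $D_\alpha\le\xi+\rho\alpha$ for all $\alpha>1$ (taking $\xi=0$). The exact curve $\tfrac\alpha2\rho_*$ shows $\rho=\rho_*/2$ is valid. Letting $\alpha\to\infty$ along near-worst pairs, any smaller $\rho$ fails.

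For GDP, I will use the fact that the trade-off function between $\mathcal N(\mu_1,\Sigma)$ and $\mathcal N(\mu_2,\Sigma)$ equals $G_{\|\Delta\|}$. This follows because whitening is a bijection, and by projecting onto the direction $\Delta$ the likelihood ratio test reduces to one dimension (Neyman–Pearson). The mechanism is therefore $\mu$-GDP iff $\mu\ge\sup\|\Delta\|=\sqrt{\rho_*}$, since $G_\mu$ is strictly decreasing in $\mu$ and the infimum of trade-off functions is attained in the limit.

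I expect the main obstacle to be this last step: arguing that the smallest GDP parameter is exactly the supremum when $\rho_*$ is possibly not attained. The fix is that $G_\mu$ is continuous and monotone in $\mu$, so the pointwise infimum of $G_{\|\Delta\|}$ over the family equals $G_{\sqrt{\rho_*}}$.
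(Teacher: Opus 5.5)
Your proposal is correct and follows the same route as the paper: the closed-form common-covariance Gaussian R\'enyi divergence $\frac{\alpha}{2}(\mu_1-\mu_2)^{\T}\Sigma^{-1}(\mu_1-\mu_2)$ taken to its supremum over neighbors, the Bun--Steinke characterization for zCDP, and whitening to an isotropic pair separated by $\|\Delta\|$ for GDP. You are more careful than the paper about minimality and about hull-equivalent sensitivity sets, and the GDP ``obstacle'' you flag is harmless, since $\|\Delta\|\le\mu$ for every neighboring pair is exactly equivalent to $\sqrt{\rho_*}\le\mu$; likewise, with $\xi=0$ the zCDP minimality needs no limit $\alpha\to\infty$.
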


\begin{proof}
    The R\`enyi divergence between two Gaussian distributions with common covariance matrix has closed form~\citep{van2014renyi,mironov2017renyi}
    $$D_\alpha(\mathcal{N}(\mu_1,\Sigma)\Vert\mathcal{N}(\mu_2,\Sigma))=\frac{\alpha}{2}(\mu_1-\mu_2)^\T\Sigma^{-1}(\mu_1-\mu_2),$$
    for all orders $\alpha$. Thus, the R\'enyi divergence for specific $\alpha$ between noised outputs of adjacent datasets has lowest upper bound of $\frac{\alpha}{2}\rho_*(\Sigma)$.
    A mechanism satisfies $\rho$-zCDP if and only if its R\`enyi Divergence between neighboring outputs is bounded by $\rho\alpha$ for every $\alpha>1$~\citep{bun2016concentrated}, which means we directly get that the mechanism satisfies $(\rho_*/2)$-zCDP. GDP measures privacy by how well the best possible hypothesis test can distinguish a mechanism's output on two neighboring datasets, and says a mechanism is $\mu$-GDP if that best test is never more powerful than distinguishing $\mathcal{N}(0,1)$ and $\mathcal{N}(\mu,1)$. Testing $\mathcal{N}(\mu_1,\Sigma)$ vs. $\mathcal{N}(\mu_2,\Sigma)$ is equivalent by invertible whitening map $\Sigma^{-1/2}$ to testing two isotropic Gaussians separated by $\Vert\Sigma^{-1/2}(\mu_1-\mu_2)\Vert$. The worst case neighbor mean shift is captured exactly by $\sqrt{\rho_*(\Sigma)}$ so the mechanism satisfies $\sqrt{\rho_*}$-GDP.
\end{proof}

A general RDP guarantee implies $(\varepsilon_\alpha+\frac{1}{\alpha-1}\log(1/\delta)/(\alpha-1),\delta)$-DP for $0<\delta<1$~\cite{mironov2017renyi}. Define $\Lambda:=\log\frac{p(x)}{q(x)},$ regarded as a log-likelihood random variable for two Gaussian laws $p,q$ representing mechanisms characterized by adjacent datasets.
Optimizing this conversion over \eqref{eq:rdpcurve} gives the budget
\begin{equation}\label{eq:zcdpconversion}
 \varepsilon=\rho_*/2+\sqrt{2\rho_*\log(1/\delta)}.
\end{equation}

Let $\Phi$ the standard Gaussian CDF. For $P=\mathcal{N}(s_j,\Sigma)$ and $Q=\mathcal{N}(0,\Sigma)$, the log-likelihood ratio $\Lambda$ is affine in a Gaussian, so for $q:=s_j^\T\Sigma^{-1}s_j$ it holds that
$\Lambda|P\sim\mathcal{N}(q/2,q),\qquad\Lambda|Q\sim\mathcal{N}(-q/2,q).$
Each tail probability is then a single Gaussian CDF:
$${\textstyle P(\Lambda>\varepsilon)=\Phi\left(\frac{\sqrt{q}}{2}-\frac{\varepsilon}{\sqrt{q}}\right),\;\; Q(\Lambda>\varepsilon)=\Phi\left(-\frac{\sqrt{q}}{2}-\frac{\varepsilon}{\sqrt{q}}\right)}.$$
$(\varepsilon,\delta)$-DP requires that $\delta\geq\sup_A[P(A)-e^\varepsilon Q(A)]$, so we define
\begin{align}\label{eq:profile}
 h_\varepsilon(q)&:=\Phi(\sqrt q/2-\varepsilon/\sqrt q)\nonumber\\
 &\quad-e^\varepsilon\Phi(-\sqrt q/2-\varepsilon/\sqrt q),\quad
 h_\varepsilon(0)=0.
\end{align}

The maximizing event $A$ in $P(A)-e^\varepsilon Q(A)$ is $\{\Lambda>\varepsilon\}$. For $\varepsilon\ge0$, $h_\varepsilon$ increases continuously from zero to one, since $h_\varepsilon'(q)=\frac{\phi(\sqrt q/2-\varepsilon/\sqrt q)}{2\sqrt q}>0.$
Thus, the exact worst-case $\delta$ is $h_\varepsilon(\rho_*)$ \cite{balle2018gaussian}. 
We conclude that $(\varepsilon,\delta)$-DP is exactly given by $\rho_*\le h_\varepsilon^{-1}(\delta)$ when $0<\delta<1$. For any nonzero shift, $\Lambda$ has an unbounded upper 
tail, i.e., finite Gaussian covariance cannot give pure DP at finite $\varepsilon$.

\section{The Covariance Primal and Its Conic Dual}
\label{sec:primal-dual}

Let $s_1,\ldots,s_J\in\R^d$ describe the full symmetric convex hull of a specified sensitivity set. Take $\rho>0$, $W\succ0$, and a public noise floor $\eta\ge0$. All matrices are real 
and symmetric, and $\ip{A}{B}=\tr(A^{\T}B)$. 

The primal is
\begin{equation}\tag{$\mathrm P_{\rho,\eta}$}\label{eq:primal}
\begin{aligned}
 v_\eta(\rho)=\min_{\Sigma\in\mathbb S^d}\;&\ip{W}{\Sigma}\,\,\,
 \text{s.t. }\Sigma-\eta I\succeq0,\\
 &F_j(\Sigma):=\begin{bmatrix}\Sigma&s_j\\s_j^{\T}&\rho\end{bmatrix}\succeq0,
 \quad j=1,\ldots,J.
\end{aligned}
\end{equation}
For $\eta>0$ the ordinary Schur complement gives $s_j^{\T}\Sigma^{-1}s_j\le\rho$. For $\eta=0$, the generalized Schur complement gives both $s_j\in\range\Sigma$ and $s_j^{\T}\Sigma^\dagger s_j\le\rho$. Equivalently, $\Sigma\succeq s_js_j^{\T}/\rho$. 

With $\rho=2\varepsilon_\alpha/\alpha$, \eqref{eq:primal} exactly minimizes the loss among the specified common covariance Gaussian mechanisms satisfying RDP when the sensitivity hull is exact. Theorem~\ref{thm:twopoint} makes a finite public $K$ exact for a single fixed query; Corollary~\ref{cor:allqueries} handles a uniform-query covariance.

\begin{theorem}[Conic dual, gap, and strong duality]\label{thm:dual}
Introduce
\[
 Z_j=\begin{bmatrix}A_j&b_j\\b_j^{\T}&\gamma_j\end{bmatrix}\succeq0,
 \qquad H\succeq0.
\]
A dual of \eqref{eq:primal} is
\begin{equation}\tag{$\mathrm D_{\rho,\eta}$}\label{eq:dual}
\begin{aligned}
 d_\eta(\rho)=\max\;&\eta\tr H-2\sum_jb_j^{\T}s_j-\rho\sum_j\gamma_j\\
 \text{s.t. }&\sum_j A_j+H=W,\quad Z_j\succeq0,\quad H\succeq0.
\end{aligned}
\end{equation}
For any feasible pair, its objective gap equals
\begin{equation}\label{eq:gap}
 \sum_j\ip{Z_j}{F_j(\Sigma)}+\ip{H}{\Sigma-\eta I}\ge0.
\end{equation}
For finite $J$, $W\succ0$, $\rho>0$, and finite shifts, both programs attain their optima and their values agree.
\end{theorem}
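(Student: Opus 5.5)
The argument has three parts: derive the dual by Lagrangian duality, verify the gap identity algebraically, and then obtain strong duality with attainment from compactness plus Slater's condition.

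\emph{Dual and gap identity.} Attach multipliers $Z_j\succeq0$ to $F_j(\Sigma)\succeq0$ and $H\succeq0$ to $\Sigma-\eta I\succeq0$. The Lagrangian is
\[
\ip{W}{\Sigma}-\sum_j\ip{Z_j}{F_j(\Sigma)}-\ip{H}{\Sigma-\eta I}.
\]
Block expansion gives $\ip{Z_j}{F_j(\Sigma)}=\ip{A_j}{\Sigma}+2b_j^{\T}s_j+\rho\gamma_j$. Hence the Lagrangian equals
\[
\ip{W-\sum_jA_j-H}{\Sigma}+\eta\tr H-2\sum_jb_j^{\T}s_j-\rho\sum_j\gamma_j .
\]
Minimizing over the free symmetric $\Sigma$ gives $-\infty$ unless $\sum_jA_j+H=W$, which yields \eqref{eq:dual}. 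For any primal-feasible $\Sigma$ and dual-feasible $(Z_j,H)$, substituting $W=\sum_jA_j+H$ into $\ip{W}{\Sigma}$ and subtracting the dual objective leaves exactly \eqref{eq:gap}. Each term is an inner product of two PSD matrices, so each is $\ge0$. This is weak duality.

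\emph{Primal attainment.} The primal is feasible: take $\Sigma=cI$ with $c>\eta$ and $c>\max_j\|s_j\|^2/\rho$. Then the Schur complement gives $F_j(cI)\succ0$ strictly and $\Sigma-\eta I\succ0$, so this point is a strict (Slater) primal point. Any feasible $\Sigma$ is PSD, since it is the upper-left block of $F_j$. Because $W\succ0$, we have $\ip{W}{\Sigma}\ge\lambda_{\min}(W)\tr\Sigma$. The sublevel set $\{\Sigma\text{ feasible}:\ip{W}{\Sigma}\le\ip{W}{cI}\}$ is therefore closed and bounded, hence compact. So $v_\eta(\rho)$ is attained and finite.

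\emph{Dual attainment and zero gap.} This is the main step. I would invoke conic strong duality: a strictly feasible primal with finite value implies that the dual attains its optimum and the gap is zero. That gives everything except primal attainment, which was shown above.

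To be safe, I would also exhibit a strictly feasible dual point, so the symmetric version of the theorem applies as well. Take $H=\tfrac12 W\succ0$ and $A_j=\tfrac{1}{2J}W$. Set $b_j=0$ and $\gamma_j=1>0$. Each $Z_j$ is then positive definite, and the equality constraint holds.

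With strict feasibility on both sides, both optima are attained and $v_\eta(\rho)=d_\eta(\rho)$. The only care needed is to state the conic duality theorem in the primal/dual pairing form used here, with the equality constraint on the dual side.
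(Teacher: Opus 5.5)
Your proposal is correct and follows essentially the same route as the paper: it forms the Lagrangian with the block-trace identity $\ip{Z_j}{F_j(\Sigma)}=\ip{A_j}{\Sigma}+2b_j^{\T}s_j+\rho\gamma_j$, gets the gap identity by substituting $W=\sum_jA_j+H$, uses Slater's condition at $\Sigma=cI$ for dual attainment and zero gap, and gets primal attainment from coercivity of $\ip{W}{\Sigma}$ on the closed feasible set. Your explicit bound $\ip{W}{\Sigma}\ge\lambda_{\min}(W)\tr\Sigma$ and your strictly feasible dual point ($H=\tfrac12W$, $A_j=\tfrac{1}{2J}W$, $b_j=0$, $\gamma_j=1$) are small but welcome tightenings of steps the paper only sketches.
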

\begin{proof}
    We first show a block-trace identity. Block multiplication gives
    $$Z_jF_j(\Sigma)=\begin{bmatrix}
        A_j\Sigma+b_js_j^\T & A_js_j+b_js_j^\T \\
        b^\T_j\Sigma + \gamma_js_j^\T & b_j^\T+\gamma_j\rho
    \end{bmatrix},$$
    so
    $\langle Z_j,F_j(\Sigma)\rangle=\langle A_j,\Sigma\rangle+2b_j^\T s_j+\gamma_j\rho.$
    We dualize the constraints $F_j(\Sigma)\succeq 0$ each with $Z_j\succeq 0$ and $\Sigma-\eta I\succeq 0$ with $H\succeq 0$ and, giving us the Lagrangian
    $$\mathcal{L}(\Sigma;H,\{Z_j\})=\langle W,\Sigma\rangle-\sum_j\langle Z_j,F_j(\Sigma)\rangle-\langle H,\Sigma-\eta I\rangle.$$
    We can simplify using our block-trace identity
    $$\mathcal{L}=\left\langle W-H-\sum_j A_j,\Sigma\right\rangle+\eta\tr H-2\sum_j b_j^\T s_j-\rho\sum_j\gamma_j.$$
    The dual function is $g(H,\{Z_j\})=\inf_\Sigma\mathcal{L}(\Sigma;H,\{Z_j\})$. However, in the dual function $\Sigma$ is unconstrained so unless the coefficient with $\Sigma$ is zero the infimum is $-\infty$, giving us the constraint $\sum_jA_j+H=W.$
    When the constraint holds,
    $g(H,\{Z_j\})=\eta\tr H-2\sum_j b_j^\T s_j-\rho\sum_j\gamma_j,$
    demonstrating that~\eqref{eq:dual} is the Lagrangian dual of~\eqref{eq:primal}. Let $\Sigma$ be primal-feasible and $(H,\{Z_j\})$ be dual-feasible, so in particular $H+\sum_j A_j=W$. We get that
    $\langle W, \Sigma\rangle=\langle H,\Sigma\rangle +\sum_j\langle A_j,\Sigma\rangle,$
    writing $\langle H,\Sigma\rangle=\langle H,\Sigma-\eta I\rangle+\eta\tr H$ and using the block-trace identity we write $\langle W,\Sigma\rangle$ as:
    $$\langle H,\Sigma-\eta I\rangle+\eta\tr H+\sum_j\langle Z_j,F_j(\Sigma)\rangle-2\sum_j b_j^\T s_j-\rho\sum_j\gamma_j.$$
    Moving the dual-objective terms to the left-hand side gives us~\eqref{eq:gap} for any feasible pair. We are left with the strong duality claim. Slater's condition tells us that in a convex program if there exists a point in the primal with strict feasibility and the primal has boundedness from below then the primal and dual optimum exist and agree~\citep{vandenberghe2004convex}. Since $W\succ 0$ and $\Sigma\succeq 0$ clearly $v_\eta(\rho)\geq 0$, so we have boundedness from below. Consider $\Sigma=cI$ for some constant $c>\max\{\eta,\max_j\Vert s_j\Vert^2/\rho\}$. Clearly $\Sigma-\eta I\succ 0$ and $F_j(\Sigma)\succ 0$ for all $j$ so we have strict feasibility. $\langle W,\Sigma\rangle\rightarrow\infty$ as $\Vert \Sigma\Vert\rightarrow\infty$ along any direction in the closed feasible set, so we conclude that the optimum is attainable by Extreme Value Theorem~\citep{rudin2021principles}.
\end{proof}

\section{MI-DP: Exact Programs}
\label{sec:mi-dp}

\subsection{Definition and conditional channel capacity}
Fix a database size $N$ and product domain $\mathcal X=\prod_i\mathcal X_i$ (for the denoiser, $K^N$). For random $X$, Cuff--Yu MI-DP is
\begin{equation}\label{eq:midpdef}
 \cI(M):=\sup_{P_X}\max_i I(X_i;M(X)\mid X_{-i})\le\kappa.
\end{equation}

All information is in nats. The supremum includes \emph{all joint priors}, with arbitrary record dependence \cite{cuff2016mutual}. 

Let $\beta=(i,u)$
where $i$ is the coordinate/record whose privacy we are measuring and $u$ is the fixed value of all the other records
$X_{-i}$. Let $W_\beta$ be the channel $a\mapsto M(u,a)$. Here $a\in\mathcal{X}_i$ is the channel input, i.e. the value of record $i$, and the output is $M(u,a)$, the mechanism run on the dataset with record $i$ equal to $a$ and all other records fixed by $u$. Let $\pi$ be a probability distribution on $\mathcal{X}_i$, and let $A\sim\pi$ and $Y:=M(u,a)$.
Then
\begin{equation}\label{eq:capacity}
 \cI(M)=\sup_\beta\mathsf C(W_\beta),\qquad
 \mathsf C(W_\beta)=\sup_\pi I_\pi(A;Y).
\end{equation}
This is easy to see if one conditions any prior on $X_{-i}\triangleq u$: the conditional information is an average of channel informations, each at most the corresponding capacity. Conversely, choose $X_{-i}$ deterministically and any distribution $\pi$ for the remaining record. Taking suprema leads to \eqref{eq:capacity}.

\subsection{Exact binary record calibration and its dual}

Suppose every record has two possible values. Then when the Gaussian means differ by $s$, we let $q=s^{\T}\Sigma^{-1}s$. Whitening and translating reduces its (informative) coordinate to
\[
 Y=\sqrt q\,B/2+Z,\quad Z\sim\mathcal N(0,1),\quad B\in\{-1,1\}.
\]
By reflection symmetry and concavity of mutual information in the input distribution, we know
that the uniform sign is capacity-achieving. Its capacity is
\begin{equation}\label{eq:binarycapacity}
 \psi(q)=\log2-\E_Z\log\left(1+e^{-q/2-\sqrt q Z}\right).
\end{equation}
The real Gaussian I--MMSE identity \cite{guo2005mutual} gives
\begin{equation}\label{eq:immse}
 \psi'(q)=\frac18\mmse(B\mid\sqrt q\,B/2+Z)>0.
\end{equation}
Also $\psi(0)=0$ and $\psi(q)\to\log2$ as $q\to\infty$: a sign detector has vanishing error, so its conditional entropy vanishes. Thus, for $0<\kappa<\log2$,
\begin{equation}\label{eq:exactbinary}
 \kappa\text{-MI-DP}\quad\Longleftrightarrow\quad
 \sup_{s\in\cS^{\rm rep}}s^{\T}\Sigma^{-1}s\le\psi^{-1}(\kappa).
\end{equation}
This gives an exact instance of \eqref{eq:primal} and \eqref{eq:dual} using the \emph{full replacement} sensitivity set. For singular covariance define $q_\Sigma(s)=s^{\T}\Sigma^\dagger s$ when $s\in\range\Sigma$, and $q_\Sigma(s)=+\infty$ otherwise. The same equivalence uses $q_\Sigma$ in place of the inverse quadratic: disjoint supports give $\psi(+\infty)=\log2$, and common supports reduce to the preceding Gaussian channel. For add/remove adjacency, the choice of $4\rho_{\rm ar}\le\psi^{-1}(\kappa)$ suffices for $N\ge1$, following directly from the triangle inequality in Equation~\eqref{eq:triangle-mahalanobis}.



\section{Conclusion}

We characterized the optimal privacy-utility tradeoff for releasing the ideal diffusion denoiser through a Gaussian mechanism with a general covariance. The worst-case sensitivity over datasets of any size is generated by pairs of records, specifically the largest Mahalanobis distance between two points of the data domain. This sensitivity quantity fixes the mechanism's R\'enyi, zCDP, Gaussian, and approximate DP guarantees exactly, as well as its mutual-information privacy for binary records. Finding the least-error covariance under any of these budgets is a semidefinite program whose strong dual certifies global optimality. Together, these results replace conservative sensitivity bounds with exact privacy accounting and provably optimal noise design for privacy-preserving diffusion denoisers.

\clearpage

\small{
\bibliographystyle{unsrt}
\bibliography{references}

@article{ho2020denoising,
  title={Denoising diffusion probabilistic models},
  author={Ho, Jonathan and Jain, Ajay and Abbeel, Pieter},
  journal={Advances in neural information processing systems},
  volume={33},
  pages={6840--6851},
  year={2020}
}

@article{song2020score,
  title={Score-based generative modeling through stochastic differential equations},
  author={Song, Yang and Sohl-Dickstein, Jascha and Kingma, Diederik P and Kumar, Abhishek and Ermon, Stefano and Poole, Ben},
  journal={arXiv preprint arXiv:2011.13456},
  year={2020}
}

@article{ghalebikesabi2023differentially,
  title={Differentially private diffusion models generate useful synthetic images},
  author={Ghalebikesabi, Sahra and Berrada, Leonard and Gowal, Sven and Ktena, Ira and Stanforth, Robert and Hayes, Jamie and De, Soham and Smith, Samuel L and Wiles, Olivia and Balle, Borja},
  journal={arXiv preprint arXiv:2302.13861},
  year={2023}
}

@article{
dockhorn2023differentially,
title={Differentially Private Diffusion Models},
author={Tim Dockhorn and Tianshi Cao and Arash Vahdat and Karsten Kreis},
journal={Transactions on Machine Learning Research},
issn={2835-8856},
year={2023},
url={https://openreview.net/forum?id=ZPpQk7FJXF},
note={}
}

@inproceedings{DworkMNS06,
  author    = {Cynthia Dwork and
               Frank McSherry and
               Kobbi Nissim and
               Adam D. Smith},
  title     = {Calibrating Noise to Sensitivity in Private Data Analysis},
  booktitle = {TCC},
  year      = {2006}
}

@inproceedings{carlini2021extracting,
  title={Extracting training data from large language models},
  author={Carlini, Nicholas and Tramer, Florian and Wallace, Eric and Jagielski, Matthew and Herbert-Voss, Ariel and Lee, Katherine and Roberts, Adam and Brown, Tom and Song, Dawn and Erlingsson, Ulfar and others},
  booktitle={30th USENIX security symposium (USENIX Security 21)},
  pages={2633--2650},
  year={2021}
}

@inproceedings{CuffY16,
  author       = {Paul Cuff and
                  Lanqing Yu},
  title        = {Differential Privacy as a Mutual Information Constraint},
  booktitle    = {Proceedings of the 2016 {ACM} {SIGSAC} Conference on Computer and
                  Communications Security, Vienna, Austria, October 24-28, 2016},
  pages        = {43--54},
  publisher    = {{ACM}},
  year         = {2016}
}

@article{guo2005mutual,
  title={Mutual information and minimum mean-square error in Gaussian channels},
  author={Guo, Dongning and Shamai, Shlomo and Verd{\'u}, Sergio},
  journal={IEEE transactions on information theory},
  volume={51},
  number={4},
  pages={1261--1282},
  year={2005},
  publisher={IEEE}
}

@article{verdu2010mismatched,
  title={Mismatched estimation and relative entropy},
  author={Verd{\'u}, Sergio},
  journal={IEEE Transactions on Information Theory},
  volume={56},
  number={8},
  pages={3712--3720},
  year={2010},
  publisher={IEEE}
}

@inproceedings{guo2009relative,
  title={Relative entropy and score function: New information-estimation relationships through arbitrary additive perturbation},
  author={Guo, Dongning},
  booktitle={2009 IEEE International Symposium on Information Theory},
  pages={814--818},
  year={2009},
  organization={IEEE}
}

@ARTICLE{JVW18,
  author={Jiao, Jiantao and Venkat, Kartik and Weissman, Tsachy},
  journal={IEEE Transactions on Information Theory}, 
  title={Mutual Information, Relative Entropy and Estimation Error in Semi-Martingale Channels}, 
  year={2018},
  volume={64},
  number={10},
  pages={6662-6671}
}

@ARTICLE{AW12,
  author={Atar, Rami and Weissman, Tsachy},
  journal={IEEE Transactions on Information Theory}, 
  title={Mutual Information, Relative Entropy, and Estimation in the Poisson Channel}, 
  year={2012},
  volume={58},
  number={3},
  pages={1302-1318}
}

@INPROCEEDINGS{VW12,
  author={Venkat, Kartik and Weissman, Tsachy},
  booktitle={2012 IEEE International Symposium on Information Theory Proceedings}, 
  title={Pointwise relations between information and estimation in Gaussian noise}, 
  year={2012},
  volume={},
  number={},
  pages={701-705},
  doi={10.1109/ISIT.2012.6284310}
}

@INPROCEEDINGS{RPD18,
  author={Reeves, Galen and Pfister, Henry D. and Dytso, Alex},
  booktitle={2018 IEEE International Symposium on Information Theory (ISIT)}, 
  title={Mutual Information as a Function of Matrix SNR for Linear Gaussian Channels}, 
  year={2018},
  volume={},
  number={},
  pages={1754-1758},
  doi={10.1109/ISIT.2018.8437326}}

@Article{DBPS17,
AUTHOR = {Dytso, Alex and Bustin, Ronit and Poor, H. Vincent and Shamai (Shitz), Shlomo},
TITLE = {A View of Information-Estimation Relations in Gaussian Networks},
JOURNAL = {Entropy},
VOLUME = {19},
YEAR = {2017},
NUMBER = {8},
ARTICLE-NUMBER = {409},
URL = {https://www.mdpi.com/1099-4300/19/8/409},
ISSN = {1099-4300},
DOI = {10.3390/e19080409}
}

@INPROCEEDINGS{HS14,
  author={Han, Guangyue and Song, Jian},
  booktitle={2014 IEEE International Symposium on Information Theory}, 
  title={Extensions of the I-MMSE relation}, 
  year={2014},
  volume={},
  number={},
  pages={2202-2206}
}

@ARTICLE{HS16,
  author={Han, Guangyue and Song, Jian},
  journal={IEEE Transactions on Information Theory}, 
  title={Extensions of the I-MMSE Relationship to Gaussian Channels With Feedback and Memory}, 
  year={2016},
  volume={62},
  number={10},
  pages={5422-5445}
}

@article{do2026,
  author = {Tue Do and Daniel Alabi},
  title = {{DIME}: Query-Efficient Framework for Membership Inference on Diffusion Models},
  journal={arXiv preprint arXiv:2608.22824},
  year = {2026},
  eprint = {2608.22824},
  archivePrefix = {arXiv},
  primaryClass = {cs.LG},
  url = {https://arxiv.org/abs/2608.22824}
}

@article{dong2022,
 author = {Jinshuo Dong and Aaron Roth and Weijie J. Su},
 title = {{Gaussian} Differential Privacy},
 journal = {Journal of the Royal Statistical Society: Series B (Statistical Methodology)},
 volume = {84},
 number = {1},
 pages = {3--37},
 year = {2022},
 doi = {10.1111/rssb.12454}
}

@inproceedings{balle2018gaussian,
  author = {Balle, Borja and Wang, Yu-Xiang},
  title = {Improving the {Gaussian} Mechanism for Differential Privacy: Analytical Calibration and Optimal Denoising},
  booktitle = {Proceedings of the 35th International Conference on Machine Learning},
  editor = {Dy, Jennifer and Krause, Andreas},
  series = {Proceedings of Machine Learning Research},
  volume = {80},
  pages = {394--403},
  publisher = {PMLR},
  year = {2018},
  url = {https://proceedings.mlr.press/v80/balle18a.html}
}

@inproceedings{bun2016concentrated,
  author = {Bun, Mark and Steinke, Thomas},
  title = {Concentrated Differential Privacy: Simplifications, Extensions, and Lower Bounds},
  booktitle = {Theory of Cryptography (TCC 2016-B), Part I},
  editor = {Hirt, Martin and Smith, Adam},
  series = {Lecture Notes in Computer Science},
  volume = {9985},
  pages = {635--658},
  publisher = {Springer},
  year = {2016},
  doi = {10.1007/978-3-662-53641-4_24},
  url = {https://arxiv.org/abs/1605.02065}
}

@inproceedings{mironov2017renyi,
  author = {Mironov, Ilya},
  title = {{R\'enyi} Differential Privacy},
  booktitle = {2017 IEEE 30th Computer Security Foundations Symposium (CSF)},
  pages = {263--275},
  publisher = {IEEE},
  year = {2017},
  doi = {10.1109/CSF.2017.11},
  url = {https://arxiv.org/abs/1702.07476}
}

@inproceedings{cuff2016mutual,
  author = {Cuff, Paul and Yu, Lanqing},
  title = {Differential Privacy as a Mutual Information Constraint},
  booktitle = {Proceedings of the 2016 ACM SIGSAC Conference on Computer and Communications Security},
  year = {2016},
  pages = {43--54},
  publisher = {ACM},
  doi = {10.1145/2976749.2978308},
  url = {https://arxiv.org/abs/1608.03677}
}

@book{rockafellar1997convex,
  title={Convex analysis},
  author={Rockafellar, R Tyrrell},
  volume={28},
  year={1997},
  publisher={Princeton university press}
}

@article{van2014renyi,
  title={R{\'e}nyi divergence and Kullback-Leibler divergence},
  author={Van Erven, Tim and Harremos, Peter},
  journal={IEEE transactions on information theory},
  volume={60},
  number={7},
  pages={3797--3820},
  year={2014},
  publisher={IEEE}
}

@book{vandenberghe2004convex,
  title={Convex optimization},
  author={Vandenberghe, Lieven and Boyd, Stephen},
  volume={1},
  year={2004},
  publisher={Cambridge university press Cambridge}
}

@article{yamamoto1983source,
  author = {Hirosuke Yamamoto},
  title = {A Source Coding Problem for Sources with Additional Outputs to Keep Secret from the Receiver or Wiretappers},
  journal = {IEEE Transactions on Information Theory},
  year = {1983},
  volume = {29},
  number = {6},
  pages = {918--923},
  month = nov,
  doi = {10.1109/TIT.1983.1056749}
}

@inproceedings{makhdoumi2014funnel,
  author = {Ali Makhdoumi and Salman Salamatian and Nadia Fawaz and Muriel M{\'e}dard},
  title = {From the Information Bottleneck to the Privacy Funnel},
  booktitle = {Proceedings of the IEEE Information Theory Workshop (ITW)},
  year = {2014},
  pages = {501--505},
  doi = {10.1109/ITW.2014.6970882},
  url = {https://arxiv.org/abs/1402.1774}
}

@inproceedings{asoodeh2016mmse,
  author = {Shahab Asoodeh and Fady Alajaji and Tam{\'a}s Linder},
  title = {Privacy-Aware {MMSE} Estimation},
  booktitle = {Proceedings of the IEEE International Symposium on Information Theory (ISIT)},
  year = {2016},
  pages = {1989--1993},
  doi = {10.1109/ISIT.2016.7541647},
  url = {https://arxiv.org/abs/1601.07417}
}

@article{asoodeh2019estimation,
  author = {Shahab Asoodeh and Mario Diaz and Fady Alajaji and Tam{\'a}s Linder},
  title = {Estimation Efficiency Under Privacy Constraints},
  journal = {IEEE Transactions on Information Theory},
  year = {2019},
  volume = {65},
  number = {3},
  pages = {1512--1534},
  month = mar,
  doi = {10.1109/TIT.2018.2865558},
  url = {https://arxiv.org/abs/1707.02409}
}

@article{wang2019privacy,
  author = {Hao Wang and Lisa Vo and Fl{\'a}vio P. Calmon and Muriel M{\'e}dard and Ken R. Duffy and Mayank Varia},
  title = {Privacy With Estimation Guarantees},
  journal = {IEEE Transactions on Information Theory},
  year = {2019},
  volume = {65},
  number = {12},
  pages = {8025--8042},
  month = dec,
  doi = {10.1109/TIT.2019.2934414},
  url = {https://arxiv.org/abs/1710.00447}
}

@article{geng2016optimal,
  author = {Geng, Quan and Viswanath, Pramod},
  title = {The Optimal Noise-Adding Mechanism in Differential Privacy},
  journal = {IEEE Transactions on Information Theory},
  volume = {62},
  number = {2},
  pages = {925--951},
  month = feb,
  year = {2016},
  doi = {10.1109/TIT.2015.2504967},
  url = {https://pramodv.web.engr.illinois.edu/pubs/GV16_1.pdf}
}

@inproceedings{xiao2023optimal,
  author = {Xiao, Yingtai and He, Guanlin and Zhang, Danfeng and Kifer, Daniel},
  title = {An Optimal and Scalable Matrix Mechanism for Noisy Marginals under Convex Loss Functions},
  booktitle = {Advances in Neural Information Processing Systems},
  volume = {36},
  year = {2023},
  url = {https://proceedings.neurips.cc/paper_files/paper/2023/hash/414f4c9fe9653e5de98fad6964d50315-Abstract-Conference.html}
}

@inproceedings{nikolov2024gaussian,
  author = {Nikolov, Aleksandar and Tang, Haohua},
  title = {General {Gaussian} Noise Mechanisms and Their Optimality for Unbiased Mean Estimation},
  booktitle = {15th Innovations in Theoretical Computer Science Conference (ITCS 2024)},
  series = {Leibniz International Proceedings in Informatics (LIPIcs)},
  volume = {287},
  pages = {85:1--85:23},
  publisher = {Schloss Dagstuhl -- Leibniz-Zentrum f{\"u}r Informatik},
  year = {2024},
  doi = {10.4230/LIPIcs.ITCS.2024.85}
}

@article{rudin2021principles,
  title={Principles of mathematical analysis},
  author={Rudin, Walter},
  year={2021}
}

@inproceedings{abadi2016deep,
  author    = {Mart{\'\i}n Abadi and Andy Chu and Ian Goodfellow and
               H. Brendan McMahan and Ilya Mironov and Kunal Talwar and
               Li Zhang},
  title     = {Deep Learning with Differential Privacy},
  booktitle = {Proceedings of the 2016 ACM SIGSAC Conference on Computer and
               Communications Security},
  pages     = {308--318},
  year      = {2016}
}

@article{rao2025score,
  title={Score-based membership inference on diffusion models},
  author={Rao, Mingxing and Qu, Bowen and Moyer, Daniel},
  journal={arXiv preprint arXiv:2509.25003},
  year={2025}
}
}

\section{AI Use Acknowledgment}
In this work, we used generative AI tools to generate
and optimize the code that was used 
to run the simulation and experiments (Fig.~\ref{fig:simd10}).

\end{document}